\documentclass[11pt]{article}
\usepackage{amsfonts,amsthm,amssymb,latexsym,amsmath,epsfig,color}
\usepackage{verbatim,graphicx}
\usepackage{dsfont}
\usepackage{a4wide}
\usepackage[utf8]{inputenc}

\theoremstyle{plain}
\newtheorem{theorem}{Theorem}
\newtheorem{corollary}{Corollary}
\newtheorem{lemma}{Lemma}

\newtheorem{fact}{Fact}

\begin{document}

\title{Mechanism design for aggregating energy consumption and quality of service in speed scaling scheduling\thanks{A preliminary version of this article appeared in \emph{Proc.\ of the 9th Conf.\ on Web and Internet Economics} (WINE 2013).}}

\author{Christoph D\"urr\thanks{Christoph.Durr@LIP6.fr, Sorbonne Universités, UPMC Univ Paris 06, CNRS, LIP6, Paris, France}
\and
{\L}ukasz Je{\.z}\thanks{lje@cs.uni.wroc.pl,
Eindhoven Institute of Technology (NL) and Institute of Computer Science, University of  Wroc{\l}aw (PL).}
\and
\'Oscar C. V\'asquez\thanks{oscar.vasquez@usach.cl,
University of Santiago of Chile, Department of Industrial Engineering.}}

\date{}

\maketitle

\begin{abstract}
  We consider a strategic game, where players submit jobs to a~machine
  that  executes  all  jobs  in  a~way  that  minimizes  energy  while
  respecting  the given deadlines.   The energy  consumption is  then
  charged to the  players in some way.  Each  player wants to minimize
  the sum  of that  charge and of  their job's deadline  multiplied by
  a~priority   weight.   Two   charging  schemes   are   studied,  the
  \emph{proportional cost share} which does not always admit pure Nash
  equilibria, and  the \emph{marginal  cost share}, which  does always
  admit  pure Nash  equilibria,  at  the price  of  overcharging by  a
  constant factor.
\end{abstract}

\paragraph{keywords}
scheduling, energy management, quality of service, optimization, mechanism design.

\section{Introduction}
In   many  computing  systems,  maximizing quality of service comes generally at the price of a high energy consumption. This is also the case
for the speed  scaling scheduling model considered in  this paper.  It
has  been introduced  in~\cite{YaoDemmersShenker1995}, and  triggered a
lot     of    work     on    offline     and     online    algorithms;
see~\cite{albers2010energy} for an overview.

The online  and offline optimization problem for  minimizing flow time
while respecting a maximum energy consumption has been studied for the
single                machine                setting                in
\cite{pruhs2008getting,albers2007energy,bansal2007speed,chan2010non}
and for  the parallel machines setting  in \cite{angel2012energy}. For
the  variant  where  an  aggregation   of  energy  and  flow  time  is
considered, polynomial approximation algorithms have been presented in
\cite{carrasco11:_energ_aware_sched_weigh_compl,bansal2012improved,megow13:dual}.

In  this paper  we  propose to  study  this problem  from a  different
perspective, namely  as a strategic  game. In society  many ecological
problems are  either addressed in a centralized manner,  like forcing
citizens to sort household waste, or in a  decentralized manner, like
tax  incentives to  enforce ecological  behavior. This  paper proposes
incentives for a  scheduling game, in form of an energy cost charging
scheme.

Consider a scheduling problem for a single processor, that can run at
variable speed, such as the modern microprocessors Intel SpeedStep, AMD PowerNow! or IBM EnergyScale.
Higher speed means  that jobs finish earlier  at the
price  of a  higher energy  consumption.
Each  job has  some  workload,  representing  a number  of
instructions to execute, and a  release time before which it cannot be
scheduled.   Every  user  submits  a~single  job  to  a  common
processor, declaring  the job parameters, together with  a deadline,
that the player chooses at his convenience.

The processor  will schedule the submitted jobs  preemptively, so that
all release times  and deadlines are respected and  the overall energy
usage is minimized.  The energy consumed by the schedule needs to be
charged to the users. The individual goal of each user is to minimize
the  sum  of the  energy  cost share  and  of  the requested weighted deadline. The weight is a private priority factor representing the individual importance of a small deadline.
This factor includes implicitly a~conversion factor
that allows for an aggregation of the deadline and energy consumption into a single individual penalty.

In a  companion paper~\cite{durr2014penaltyscheduling} we study this game from
the point of view of the game regulator, in a different setting. The players announce with their job their priority factors, and the regulator gets to decide on the completion time of the jobs.  The usual questions one asks for such a game, is the existence of a cost sharing mechanism that would be truthful on the priority factors and which charge to the players amounts that sum up to a value comparable within a constant factor to the actual energy consumption of the schedule.  This contrasts with the setting considered in this paper, where the player's strategies are the job's deadlines.

\section{The model}
Formally, we  consider a non-cooperative  game with $n$ players  and a
regulator.   The regulator  manages  the machine  where  the jobs  are
executed.  Each player has a job  $i$ with a workload $w_i$, a release
time $r_i$  and a  priority $p_i$, representing  a quality  of service
coefficient.  The player  submits  its job  together  with a  deadline
$d_i>r_i$ to the regulator.  Workloads, release times and deadlines are public
information   known  to   all  players,   while  quality   of  service
coefficients can be private.

The regulator implements some  \emph{cost sharing mechanism}, which is
known to all users. This mechanism defines a cost share function $b_i$
specifying how much  player $i$ is charged. The penalty of player
$i$ is the sum of two values:
his \emph{energy cost share} $b_i(w,r,d)$ defined by the mechanism,
where $w=(w_1,\ldots,w_n), r=(r_1,\ldots,r_n), d=(d_1,\ldots,d_n)$ are the input values,
and his \emph{waiting cost}, which can be either $p_i d_i$ or $p_i (d_i-r_i)$;
we use the former waiting cost throughout the article but all our results apply to both settings.
The sum of all player's penalties, i.e., energy cost shares and waiting costs
will be called the \emph{utilitarian social cost}.

The regulator computes a minimum  energy schedule for a single machine in the speed scaling  model. In this model at any point in time $t$ the processor can run at some speed $s(t) \geq 0$. As a result, for any time interval  $I$,  the  workload  executed in~$I$ is  $\int_{t\in I}  s(t) \, \mathrm{d}t$ at the price of an energy consumption  valuated at $\int_{t\in I}  s(t)^\alpha\,\mathrm{d}t$
for some fixed physical constant $\alpha \in [2,3]$ which is device dependent~\cite{brooks2000power}.

The sum of the energy used by this optimum schedule and of all the players'
waiting costs will be called the \emph{effective social cost}.

The  minimum energy  schedule  can  be computed  in  time $O(n^2  \log
n)$~\cite{LiOn3200discrete}  and  has  (among  others)  the  following
properties~\cite{YaoDemmersShenker1995}.  The jobs in the schedule are
executed by  preemptive earliest deadline  first order (EDF),  and the
speed  $s(t)$  at  which  they  are  processed  is  piecewise  constant.
Preemptive EDF means that at every time point among all jobs which are
already  released and  not yet  completed, the  job with  the smallest
deadline is executed, using job indexes to break ties.

The cost  sharing mechanism defines the game  completely.  Ideally, we
would  like  the  game  and   the  mechanism  to  have  the  following
properties.
\begin{description}
\item[existence of pure  Nash equilibria] This  means that  there is always a
  strategy  profile vector $d$  such that  no player  can unilaterally
  deviate  from  his  strategy  $d_i$ while  strictly  decreasing  his
  penalty.
\item[budget balance] The mechanism is $c$-budged balanced,
when the sum of the cost shares is no smaller than the total energy consumption
and no larger than $c$ times the energy consumption. Ideally we would like $c$ to be close to $1$.
\end{description}

In  the sequel  we  introduce  and study  two  different cost  sharing
mechanisms,  namely  \textsc{Proportional  Cost Sharing}  where  every
player pays  exactly the  cost generated during  the execution  of his
job, and  \textsc{Marginal Cost Sharing}  where every player  pays the
increase of energy cost generated by adding this player to the game.

\section{Proportional cost sharing}

The  proportional cost sharing  is the  simplest budget  balanced cost
sharing scheme one  can think of. Every player  $i$ is charged exactly
the  energy consumed during  the execution  of his  job. Unfortunately
this mechanism does not behave well as we show in
Theorem~\ref{thm:prop}.

\begin{fact}
  In a single player game, the player's penalty is minimized by the deadline
\[
        r_1 + w_1(\alpha-1)^{1/\alpha} p_1^{-1/\alpha}.
\]
\end{fact}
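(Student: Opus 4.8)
The plan is to first reduce the statement to an explicit one-variable optimization problem, and then solve it by elementary calculus. Since there is a single player, the proportional cost share charges that player the entire energy of the optimal schedule. The schedule must execute workload $w_1$ within the interval $[r_1,d_1]$, and by the cited properties the optimal speed is piecewise constant; because $s\mapsto s^\alpha$ is strictly convex for $\alpha\ge 2$, Jensen's inequality (equivalently, a standard averaging argument) shows that a single constant speed is optimal over a single availability interval. Hence the processor runs at speed $w_1/(d_1-r_1)$ throughout, and the consumed energy equals
\[
  (d_1-r_1)\left(\frac{w_1}{d_1-r_1}\right)^{\alpha}
  =\frac{w_1^{\alpha}}{(d_1-r_1)^{\alpha-1}}.
\]

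Next I would write the player's penalty as a function of the only free variable, the deadline $d_1$. Adding the waiting cost $p_1 d_1$ to the energy above and substituting $x:=d_1-r_1>0$ gives the penalty
\[
  P(x)=w_1^{\alpha}\,x^{-(\alpha-1)}+p_1 x+p_1 r_1 .
\]
I would then minimize $P$ over $x>0$. Differentiating yields $P'(x)=-(\alpha-1)w_1^{\alpha}x^{-\alpha}+p_1$, so the unique stationary point solves $(\alpha-1)w_1^{\alpha}x^{-\alpha}=p_1$, that is $x=(\alpha-1)^{1/\alpha}w_1\, p_1^{-1/\alpha}$. Translating back via $d_1=r_1+x$ recovers exactly the claimed expression $r_1+w_1(\alpha-1)^{1/\alpha}p_1^{-1/\alpha}$.

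Finally I would confirm this stationary point is the global minimizer. Since $P''(x)=\alpha(\alpha-1)w_1^{\alpha}x^{-(\alpha+1)}>0$ for all $x>0$, the function $P$ is strictly convex on the feasible region, so the stationary point is its unique global minimum; the boundary behaviour $P(x)\to\infty$ as $x\to 0^+$ and as $x\to\infty$ further confirms that the minimum lies in the interior. None of the steps presents a real obstacle: the only point deserving care is the reduction to a single constant speed, which rests on the convexity of the energy rate, while the remaining optimization is a routine one-dimensional calculation.
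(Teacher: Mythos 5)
Your proof is correct and follows essentially the same route as the paper: write the penalty as $p_1(r_1+x) + w_1^\alpha x^{1-\alpha}$ with $x = d_1 - r_1$, set the derivative to zero, and use convexity to conclude the stationary point is the global minimizer. You are in fact slightly more careful than the paper, which leaves the optimality of constant speed implicit and mistakenly calls the penalty ``concave'' where convexity is what justifies the first-order condition.
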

\begin{proof}
  If player  $1$ chooses deadline $d_1=r_1  + x$, then his job is processed in the time interval $[r_1,\ r_1 + x]$ at speed $w_1/x$. Therefore
  his penalty is
\[
         p_1 (r_1 + x)  + x^{1-\alpha} w_1^\alpha.
\]
Differentiating this expression in $x$, and using the  fact that the penalty
is  concave in  $x$ for  any $x>0$  and $\alpha>0$,  we have  that the
optimal  $x$ for  the player will set the derivative to zero. This implies the claimed deadline.
\end{proof}

If there are at least two players however, the game does not have nice
properties as we show now.
\begin{theorem} \label{thm:prop}
  The \textsc{Proportional Cost Sharing}  does not always admit a pure
  Nash equilibrium.
\end{theorem}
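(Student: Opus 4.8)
The plan is to exhibit a single instance with no pure Nash equilibrium, and the simplest candidate is two \emph{identical} players with $r_1=r_2=0$, equal workloads $w_1=w_2=w$, and equal priorities $p_1=p_2=p$ (one may normalize $w=p=1$). First I would write down, as a closed form in the submitted deadlines $(d_1,d_2)$, how much energy each player is charged. Since the regulator runs preemptive EDF on a minimum-energy (YDS) schedule, for a fixed ordering of the deadlines --- say $d_1\le d_2$, so job $1$ runs first --- there are only two regimes: either the earlier job is ``dense'', i.e.\ $w_1/d_1\ge (w_1+w_2)/d_2$, in which case job $1$ runs alone on $[0,d_1]$ at speed $w_1/d_1$ and job $2$ alone on $[d_1,d_2]$ at speed $w_2/(d_2-d_1)$; or both jobs run at the common speed $(w_1+w_2)/d_2$ on $[0,d_2]$. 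In the first regime the players are charged $w_1^\alpha d_1^{1-\alpha}$ and $w_2^\alpha(d_2-d_1)^{1-\alpha}$, and in the second $w_i(w_1+w_2)^{\alpha-1}d_2^{1-\alpha}$; the opposite ordering is obtained by swapping indices. These are elementary functions, so each player's penalty is explicit.

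Next I would record the structural consequence of the Fact. By convexity of $s\mapsto s^\alpha$, whatever the profile, player $i$'s energy charge is at least $w_i^\alpha(d_i-r_i)^{1-\alpha}$, so his penalty is at least the single-player optimum $P^\ast:=p_i d_i^\ast+w_i^\alpha(d_i^\ast)^{1-\alpha}$ with $d_i^\ast=w_i(\alpha-1)^{1/\alpha}p_i^{-1/\alpha}$ from the Fact, with equality exactly when his job runs alone at its ideal speed and finishes at $d_i^\ast$. The point is that a player attains this best-possible penalty $P^\ast$ precisely by being the \emph{earlier} job and submitting $d^\ast$, provided the opponent's deadline is large enough for the dense regime; running second instead forces an additive waiting surcharge. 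Both players thus prefer to be first, but only one can. Computing the best response to an opponent deadline $x$, I expect three regimes: for large $x$ he runs first at $d^\ast$ (penalty $P^\ast$); for intermediate $x$ he runs first but is pushed to the density threshold $wx/(w_1+w_2)$; and for small $x$ it is cheaper to run \emph{second} at $x+d^\ast$ and free-ride on the opponent's short dense execution. The best response is therefore a non-monotone, discontinuous map bounded away from the diagonal.

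Finally I would close the argument by showing the fixed-point system $d_1=\mathrm{BR}(d_2),\ d_2=\mathrm{BR}(d_1)$ has no solution. A symmetric profile $d_1=d_2$ is never an equilibrium: both jobs then share the common speed, so each is charged strictly more than $P^\ast$ and can strictly gain by undercutting to run alone at $d^\ast$. For an asymmetric profile with $d_1<d_2$, substituting the three-regime best response into itself is inconsistent in every branch: if $d_2$ is large then $d_1=d^\ast$, but the best response to $d^\ast$ lands in the intermediate branch and returns a value below $d^\ast$, forcing $d_2<d_1$; if $d_2$ is small then $d_1=d_2+d^\ast>d_2$; the intermediate branch is handled likewise. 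Hence no pure Nash equilibrium exists. The main obstacle is the regime bookkeeping for the best response: one must verify the crossover between free-riding and undercutting (a transcendental comparison in $\alpha$, explicit and monotone for, e.g., $\alpha=2$), confirm that deviations into the common-speed regime never help, and treat the tie $d_1=d_2$ with its index tie-break separately. Establishing that the best-response jump is large enough to preclude any fixed point --- rather than merely shifting it --- is the crux.
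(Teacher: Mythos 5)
Your plan is the same as the paper's: the identical two-player instance ($w_1=w_2=p_1=p_2=1$, $r_1=r_2=0$), an explicit piecewise penalty formula under the YDS/EDF schedule, a best-response map obtained by regime analysis, and a no-fixed-point argument (the paper does exactly this in Lemmas~\ref{lem:p1-resp-and-min}--\ref{lem:bestResponse}). The genuine gap is your claim that the best response has \emph{three} regimes. For $\alpha>2$ it has \emph{four}. Besides (a) running first at $d^\ast=(\alpha-1)^{1/\alpha}$, (b) running first at the density threshold $d_2/2$, and (c) running second at $d_2+d^\ast$, there is a regime $\dtwoone<d_2\le\dtwotwo$ in which the best response is $d_1^{(3)}=2\left(\frac{\alpha-1}{2}\right)^{1/\alpha}>d_2$: the player takes the \emph{later} deadline while still sharing the processor. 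Your schedule decomposition hides this branch because once $d_2<d_1\le 2d_2$ the shared speed is $2/d_1$, so the charge $(d_1/2)^{1-\alpha}$ \emph{decreases} in $d_1$ and the branch $d_1+(d_1/2)^{1-\alpha}$ has an interior local minimum (trade waiting time against a lower shared speed); it is not monotone like the branch $d_1+(d_2/2)^{1-\alpha}$ you do consider. This regime is empty exactly at $\alpha=2$ and nonempty for all $\alpha\in(2,3]$; since $\alpha$ is a fixed device constant anywhere in $[2,3]$, verifying the fixed-point argument only at $\alpha=2$ (as your ``e.g., $\alpha=2$'' hedge suggests) does not prove the theorem for the model. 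Worse, the missing branch is a \emph{constant} map -- precisely the kind most liable to create a fixed point -- and ruling that out requires knowing that its domain $(\dtwoone,\dtwotwo]$ lies strictly below its constant value $d_1^{(3)}$, i.e., the inequality $\dtwotwo<d_1^{(3)}$. Locating the thresholds $\dtwoone<\dtwotwo$ and proving which of the three competing local minima dominates on each interval is exactly the bulk of the paper's work (the chain of inequalities in Lemma~\ref{lem:d2*}); your outline defers this as ``regime bookkeeping'', but with a wrong regime count that bookkeeping cannot be completed as stated.

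Two smaller inaccuracies. First, at a symmetric profile $d_1=d_2=d$ with $d<2d^\ast$ a player cannot ``undercut to run alone at $d^\ast$'' (that requires $d^\ast\le d/2$); the profitable deviation there is to $d/2$, or to any smaller deadline, so your conclusion that no symmetric equilibrium exists survives, but the stated deviation does not. Second, your branch (c) (``run second at $x+d^\ast$'') is only self-consistent when $x+d^\ast\ge 2x$, i.e., $x\le d^\ast$; checking that this branch actually dominates on the whole interval $(0,\dtwoone]$, and not merely where it exists, is another piece of the threshold analysis the sketch leaves open.
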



The  proof consists of an example consisting of two identical
players  with identical  jobs, say  $w_1=w_2=1$, $r_1=r_2=0$  and
$p_1=p_2=1$.  First we determine the best  response of player  1 as a
function of  player 2. Then we  conclude that there  is no  pure Nash
equilibrium.


\begin{lemma}\label{lem:p1-resp-and-min}
Given the second player's choice $d_2$, the penalty of the first player
as a~function of his choice $d_1$ is given by
\begin{equation}\label{eq:p1-penalty}
  f(d_1) = \begin{cases}
  f_1(d_1) = d_1+(d_1-d_2)^{1-\alpha}   &\mbox{if } d_1 \geq2d_2\\
  f_2(d_1) = d_1+(\frac{d_1}{2})^{1-\alpha} &\mbox{if } d_2 \leq d_1 \leq 2d_2 \\
  f_3(d_1) = d_1+(\frac{d_2}{2})^{1-\alpha}	&\mbox{if } \frac{d_2}{2} \leq d_1 \leq d_2 \\
  f_4(d_1) = d_1+d_1^{1-\alpha}     &\mbox{if } d_1 \leq \frac{d_2}{2}.
 \end{cases}
\end{equation}
The local minima of $f(d_1)$ are summarized in Table~\ref{tbl:minima},
and the penalties corresponding to player 1 picking these minima are
illustrated in Figure~\ref{fig:bestResponse}.
\begin{table}[t]
  \begin{tabular}[c]{lllll}
    argument							& value									& applicable range \\ \cline{1-3}
    $d_1^1 = d_2+\left(\alpha-1\right)^{1/\alpha}$    & $g_1(d_2) = d_2+\alpha \left(\alpha-1\right)^{1/\alpha-1}$     & $d_2 \leq \tau_3$ & $ \tau_3=\left(\alpha-1\right)^{1/\alpha}$ \\
    $d_1^2 = 2\left(\frac{\alpha-1}{2}\right)^{1/\alpha}$ & $g_2(d_2) = \alpha\left(\frac{\alpha-1}{2}\right)^{1/\alpha-1}$  & $\tau_1 \leq d_2 \leq \tau_5$ & $ \tau_1 = \left(\frac{\alpha-1}{2}\right)^{1/\alpha},\:\tau_5 = 2 \tau_1$ \\
    $d_1^3 = \frac{d_2}{2}$				& $g_3(d_2) = d_2/2+(d_2/2)^{1-\alpha}$				 & $d_2\leq \tau_6$ \\
    $d_1^4 = (\alpha-1)^{1/\alpha}$     & $g_4(d_2) = \alpha(\alpha-1)^{1/\alpha-1}$         & $d_2 \geq \tau_6$
    & $\tau_6 = 2(\alpha-1)^{1/\alpha}$ \\
  \end{tabular}
  \caption{The local minimum in the range of $f$ corresponding to $f_i$ is a~function of $\alpha$ and $d_2$, which we denote
  by $d_1^{i}$.  The value at such local minimum is again a~function of $\alpha$ and $d_2$, which we denote by $g_i(d_2)$.
  These are only conditional minima: they exist if and only if the condition given in the last column is satisfied.}
  \label{tbl:minima}
\end{table}
\end{lemma}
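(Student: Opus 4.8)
The plan is to prove the lemma in two stages: first derive the closed form \eqref{eq:p1-penalty} for player $1$'s penalty from the structure of the minimum-energy schedule, and then locate the local minima piece by piece. For the first stage, fix $d_2$ and recall that, since both jobs have unit workload and release time $0$, the minimum-energy schedule is determined by its maximum-density (critical) intervals in the sense of Yao--Demers--Shenker. With only two deadlines the sole candidate critical intervals are $[0,\min(d_1,d_2)]$, containing just the earlier-deadline job and having density $1/\min(d_1,d_2)$, and $[0,\max(d_1,d_2)]$, containing both jobs and having density $2/\max(d_1,d_2)$. Comparing the two densities shows that the two candidates exchange dominance exactly at $d_1=d_2/2$ and at $d_1=2d_2$, while the identity of the smaller-deadline job (the one run first under EDF) switches at $d_1=d_2$; these are precisely the three breakpoints of \eqref{eq:p1-penalty}. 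In each of the four resulting regimes I would read off the constant speed $s$ at which player $1$'s job is processed and the length of the interval it occupies, so that under proportional sharing player $1$ pays exactly $s^\alpha$ times that length. Adding his waiting cost $d_1$ yields $f_1,\dots,f_4$, and a one-line continuity check at $d_2/2$, $d_2$ and $2d_2$ confirms that the four pieces glue into a single continuous function.

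For the second stage I would treat the smooth pieces and the affine piece separately. Each of $f_1,f_3,f_4$ is strictly convex on its range, since its second derivative is a positive multiple of $(\cdot)^{-\alpha-1}>0$, so it has a unique unconstrained stationary point obtained by setting the derivative to zero; solving gives $d_1^{(1)},d_1^{(3)},d_1^{(4)}$, and substituting back and simplifying the resulting powers of $\alpha-1$ yields the values $g_1,g_3,g_4$. The remaining piece $f_2$ is affine with slope $1$, hence strictly increasing, so on its interval it is minimized at its left endpoint $d_1=d_2/2$, which is the candidate $d_1^{(2)}$ with value $g_2$.

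To pin down the applicable ranges I would note that a stationary point of a convex piece is a genuine local minimum of $f$ only if it lies in the interval on which that piece is active; imposing this membership, for instance $d_2\le d_1^{(3)}\le 2d_2$ or $d_1^{(4)}=d_2+(\alpha-1)^{1/\alpha}\ge 2d_2$ (equivalently $d_2\le(\alpha-1)^{1/\alpha}$), produces exactly the conditions recorded in the last column of Table~\ref{tbl:minima}. The endpoint candidate $d_1^{(2)}=d_2/2$ is a local minimum precisely when $f_1$ is nonincreasing immediately to its left, i.e. $f_1'(d_2/2)\le 0$, which rearranges to $d_2\le 2(\alpha-1)^{1/\alpha}$. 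It then remains to rule out spurious minima at the other two breakpoints: at $d_1=d_2$ the left derivative equals $1>0$ because the piece to the left is the increasing $f_2$, so $d_2$ is never a local minimum, while at $d_1=2d_2$ the two requirements $f_3'(2d_2)\le 0$ and $f_4'(2d_2)\ge 0$ translate into $d_2\le((\alpha-1)/2)^{1/\alpha}$ and $d_2\ge(\alpha-1)^{1/\alpha}$, which are incompatible, so $2d_2$ is excluded as well.

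I expect the main obstacle to be the first stage, namely correctly identifying in each regime which job occupies which sub-interval and at which speed. The bookkeeping is delicate because the critical interval and the EDF execution order both change across the breakpoints, and one must verify in every regime that player $1$'s job actually meets its deadline at the claimed speed. Once the schedule is pinned down precisely, the derivative computations, the convexity observations and the breakpoint comparisons of the second and third stages are routine single-variable calculus.
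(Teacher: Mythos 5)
Your proof follows essentially the same route as the paper's: piecewise convexity/stationarity analysis for $f_1,f_3,f_4$, monotonicity of the affine piece $f_2$, membership conditions for the applicable ranges, and elimination of the breakpoints $d_1=d_2$ and $d_1=2d_2$. Two minor differences, both fine: (i) the paper dismisses the derivation of \eqref{eq:p1-penalty} as ``straightforward case inspection,'' whereas you actually derive it from the Yao--Demers--Shenker critical-interval structure (the candidate intervals $[0,\min(d_1,d_2)]$ and $[0,\max(d_1,d_2)]$, with dominance switching at $d_2/2$ and $2d_2$ and the EDF order switching at $d_2$) --- this is correct and is a genuine expansion of what the paper leaves implicit; (ii) at the breakpoint $2d_2$ you rule out a local minimum via the incompatible conditions $d_2\le\left(\frac{\alpha-1}{2}\right)^{1/\alpha}$ and $d_2\ge(\alpha-1)^{1/\alpha}$, while the paper observes $f_4'(2d_2)<f_3'(2d_2)$; these arguments are equivalent.

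One point needs correction, though it is in your favor. You claim your membership conditions reproduce ``exactly'' the last column of Table~\ref{tbl:minima}, but for $d_1^{(4)}$ your computation gives $d_2\le(\alpha-1)^{1/\alpha}$, whereas the table prints $d_2\le(\alpha-1)^{1/\alpha-1}$. These coincide only at $\alpha=2$. Your condition is the correct one: $d_1^{(4)}=d_2+(\alpha-1)^{1/\alpha}\ge 2d_2$ is equivalent to $d_2\le(\alpha-1)^{1/\alpha}$, so the table's exponent appears to be a typo. (The paper's own proof only states the unexpanded condition $d_1^{(4)}\ge 2d_2$, and its later use in Lemma~\ref{lem:bestResponse} is consistent with either version, since $(\alpha-1)^{1/\alpha-1}\le(\alpha-1)^{1/\alpha}$ for $\alpha\ge2$.) Rather than asserting agreement with the table as printed, you should flag the discrepancy and state the corrected condition; as written, your sentence claims something false about the table, even though your mathematics is right.
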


\begin{figure}[ht]
  \centering
  \input{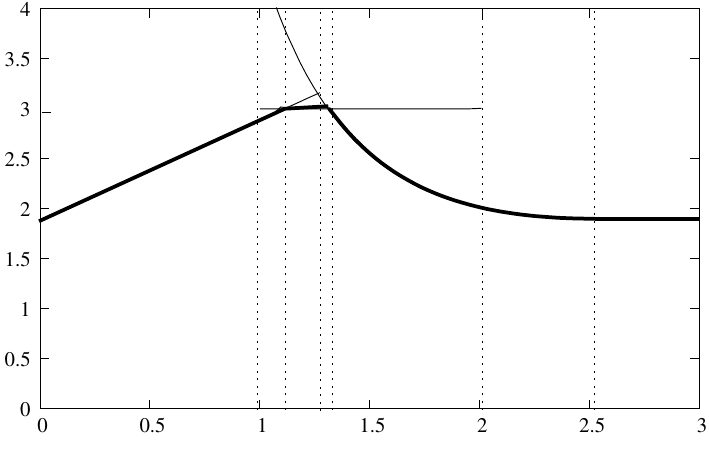_t}
  \caption{First player's penalty (in bold) when choosing his best response as a function of second player's strategy $d_2$, here for $\alpha=3$.
  	As seen on the plot, $\tau_2$ ($\tau_4$), formally defined in Lemma~\ref{lem:d2*}, is the argument for which $g_1$ and $g_2$
  	($g_2$ and $g_3$) attain the same value.}
  \label{fig:bestResponse}
\end{figure}

\begin{proof}
 Formula~\eqref{eq:p1-penalty} follows by a~straightforward case inspection.
 Then, to find all the local minima of $f$, we first look at the behavior of each of $f_i$,
 finding its local minima in their respective intervals, and afterwards we inspect the
 border points of these intervals.

\begin{description}
\item[Range of $f_1$:] The derivative of $f_1$ is
\begin{equation}\label{eq:f6-derivative}
   f_1'(d_1) = 1-(\alpha-1) (d_1-d_2)^{-\alpha},
\end{equation}
whose derivative in turn is positive for $\alpha>1$.
Hence, $f_1$ has a~local minimum at $d_1^1$ as specified.
The existence of this local minimum requires $d_1^2 \geq 2d_2$.
\item[Range of $f_2$:] The derivative of $f_2$ is
\begin{equation}\label{eq:f4-derivative}
   f_2'(d_1) = 1-\frac{\alpha-1}{2} (d_1/2)^{-\alpha},
\end{equation}
whose derivative in turn is positive for $\alpha>1$.
Hence, $f_2$ has a~local minimum at $d_1^2$ as specified.
The existence of this local minimum requires $d_2 \leq d_1^2 \leq 2d_2$,
which is equivalent to $d_1^3/2 \leq d_2 \leq d_1^2$.
\item[Range of $f_3$:] $f_3$ is an increasing function,
and therefore it attains a~minimum value only at the lower end of its range, $d_1^2$.
However, if $d_1^3$ is to be a~local minimum of $f$, there can be no local minimum
of $f$ in the range of $f_4$ (immediately to the left), so the applicable range of
$d_1^3$ is the complement of that of $d_1^4$.
\item[Range of $f_4$:] The derivative of $f_4$ is
\[
    f_4'(d_1) = 1-(\alpha-1)d_1^{-\alpha},
\]
whose derivative in turn is positive for $\alpha>1$.
Therefore, $f_4$ has a~local minimum at $d_1^4$ as specified.
Since we require that this local minimum is within the range
where $f$ coincides with $f_4$, the necessary and sufficient
condition is $d_1^4 \leq d_2/2$.
\end{description}

Now let us consider the boundaries of the ranges of each $f_i$.
Since $f_3$ is strictly increasing, the
border point of the ranges of $f_3$ and $f_2$ is not a~local minimum
of~$f$.  This leaves only the border point $d_1^3=2d_2$ of the ranges of
$f_2$ and $f_1$ to consider.  Clearly, $d_1^3$ is a~local minimum of~$f$
if and only if $f_2'(d_1^3) \leq 0$ and $f_1'(d_1^3) \geq 0$.
However, by~\eqref{eq:f4-derivative}, $f_2'(d_1^3)=2-(\alpha-1)d_2^{-\alpha}$,
and by~\eqref{eq:f6-derivative}, $f_1'(d_1^3)=2-2(\alpha-1)d_2^{-\alpha}<f_2'(d_1^3)$,
so $d_1^3$ is not a~local minimum of~$f$ either.
\end{proof}


Note that the range of $g_4$ is disjoint with the ranges of $g_2$ and $g_1$,
and with the exception of the shared border value $2(\alpha-1)^{1/\alpha}$,
also with the range of $g_3$.  However, the ranges of $g_3$, $g_2$ and $g_1$
are not disjoint.  Therefore, we now focus on their shared range, and determine
which of the functions gives rise to the true local minimum.

\begin{lemma}\label{lem:d2*}
The functions $g_2(d_2)$ and $g_4(d_2)$ are constant, the function $g_1(d_2)$ is
increasing and linear, and the function $g_3(d_2)$ is
decreasing for $d_2\leq \tau_6$.
Moreover, there exist two unique values $\tau_2$ and $\tau_4$ with
\[
g_1(\tau_2) = g_2(\tau_2)
 \mbox{ and } g_2(\tau_4) = g_3(\tau_4).
\]
In addition we have
\[
\tau_1 < \tau_2 < \tau_3 < \tau_5 < \tau_6
\]
and
\[
  \tau_2 \leq \tau_4 < \tau_5.
\]
\end{lemma}

\begin{proof}
  It follows from their definitions in Table~\ref{tbl:minima} that
  $g_2$ and $g_4$ are constant and $g_1$ strictly increasing.
In order to show that the $g_3(d_2)$ is decreasing in the range $0\leq d_2 \leq \tau_6$, we show that its derivative in $d_2$ is non-positive, namely
\begin{align*}
      \frac12 - 2^{\alpha-1}(\alpha-1) d_2^{-\alpha}
      &\leq 0
                    & \equiv \\
      1
      &\leq
      2^{\alpha}(\alpha-1) d_2^{-\alpha}
                    & \equiv \\
      d_2^{\alpha}
      &\leq
      2^{\alpha}(\alpha-1)
                    & \equiv \\
      d_2
      &\leq
      2(\alpha-1)^{1/\alpha} = \tau_6.
\end{align*}

We define $\tau_2$ as the unique root of $g_1(d_2)=g_2(d_2)$,
  namely
  \[
  \tau_2 = \alpha(\alpha-1)^{1/\alpha-1}(2^{1-1/\alpha}-1).
  \]

Now we show that there is value $\tau_4$ such that $g_3(\tau_4)=g_2$.
This follows from the fact that $g_3$ is continuous and decreasing, that its limit at $d_2 \rightarrow 0$ is $\infty$ and that $g_3(\tau_6) = g_4 < g_2$.

For the bounds on $\tau_4$ we need to show
\[
      g_3(\tau_5) < g_2 \leq g_3(\tau_2).
\]
We start with the left inequality:
\begin{align*}
g_3\left(  \tau_5 \right)
&=  g_3\left(2\left(\frac{\alpha-1}2 \right)^{1/\alpha} \right)
    \\&=  \left(\frac{\alpha-1}{2} \right)^{1/\alpha} \left(1+\frac{2}{\alpha-1} \right) \\
&=  \left(\frac{\alpha-1}{2} \right)^{1/\alpha-1} \frac{\alpha+1}2
    \\&<  \left(\frac{\alpha-1}{2} \right)^{1/\alpha-1} \alpha.
\end{align*}

For the right inequality, we first note that
\begin{align*}
  g_3\left(\tau_2\right) &= \tau_2/2+(\tau_2/2)^{1-\alpha}
  \\&= \tau_2 (1/2+(\tau_2)^{-\alpha}2^{\alpha-1}) \\
    &= \alpha(\alpha-1)^{1/\alpha-1}(2^{1-1/\alpha}-1)\left(\frac{2^{\alpha-1}}{(\alpha(\alpha-1)^{1/\alpha-1}(2^{1-1/\alpha}-1))^\alpha}+\frac{1}{2}    \right),
\end{align*}
hence $ g_3\left(\tau_2\right) \geq \alpha \left(\frac{\alpha-1}{2} \right)^{1/\alpha-1}$
is equivalent to
\begin{align*}
(2^{1-1/\alpha}-1)\left(\frac{2^{\alpha-1}}{\alpha^\alpha(\alpha-1)^{1-\alpha}(2^{1-1/\alpha}-1)^\alpha}+\frac{1}{2}
\right)&\geq2^{1-1/\alpha} & \equiv \\
(2^{1-1/\alpha}-1)\left(\frac{2^{\alpha-1}}{\alpha \alpha^{\alpha-1}(\alpha-1)^{1-\alpha}(2^{1-1/\alpha}-1)^\alpha}-\frac{1}{2}    \right)&\geq1& \equiv \\
\frac{2^{\alpha-1}(\alpha-1)^{\alpha-1} \alpha^{1-\alpha}}{\alpha(2^{1-1/\alpha}-1)^\alpha}-\frac{1}{2}  &\geq\frac{1}{2^{1-1/\alpha}-1} & \equiv \\
 \frac{1}{\alpha}\frac{\left(2-\frac{2}{\alpha}\right)^{\alpha-1}}{(2^{1-1/\alpha}-1)^\alpha}&\geq\frac{1}{2^{1-1/\alpha}-1}+\frac{1}{2}& \equiv \\
 \frac{1}{\alpha}\frac{\left(2-\frac{2}{\alpha}\right)^{\alpha-1}}{\left(\frac{2-2^{1/\alpha}}{2^{1/\alpha}}\right)^\alpha}&\geq\frac{2^{1-1/\alpha}+1}{2(2^{1-1/\alpha}-1)}& \equiv \\
 \frac{1}{\alpha}\frac{\left(2-\frac{2}{\alpha}\right)^{\alpha-1}}{(2-2^{1/\alpha})^{\alpha-1}(2-2^{1/\alpha})} 2&\geq\frac{(2+2^{1/\alpha})2^{-1/\alpha}}{2(2-2^{1/\alpha})2^{-1/\alpha}}& \equiv \\
 \frac{1}{\alpha}\left(\frac{2-\frac{2}{\alpha}}{2-2^{1/\alpha}}\right)^{\alpha-1}&\geq\frac{2+2^{1/\alpha}}{4}.
\end{align*}

We claim that for $\alpha\geq2$ we have
\begin{equation}
\label{eq:relationfunction}
\frac{2-\frac{2} {\alpha} }{2-2^{1/\alpha}} \geq  \frac1{2-\sqrt2}.
\end{equation}
Since we  have equality at  $\alpha=2$, it suffices to prove that
the left hand side is increasing.  To this end, we consider its derivative
\[
 \frac{2\left(2-2^\frac{1}{a} - \ln 2 \cdot (1-\frac{1}{\alpha}) \cdot 2^\frac{1}{\alpha}\right)} {\left(\alpha(2-2^\frac{1}{\alpha})\right)^2}
 =
 \frac{4 - 2^{1+\frac{1}{\alpha}} \cdot \left(1+(1-\frac{1}{\alpha})\ln 2\right)}
 {\left(\alpha(2-2^\frac{1}{\alpha})\right)^2},
\]
and note that its enumerator is increasing in  $\alpha$ and equals $0$ for $\alpha=1$.  Thus~\eqref{eq:relationfunction} holds.
%
%

This permits us to define
\[
    z(\alpha):=(2-\sqrt2)^{1-\alpha}/\alpha,
\]
and to upper bound
\[
    \frac{1}{\alpha}\left(\frac{2-\frac{2}{\alpha}}{2-2^{1/\alpha}}\right)^{\alpha-1}
\geq
    z(\alpha).
\]
In order to lower bound
\[
 z(\alpha)\geq \frac{2+2^{1/\alpha}}{4}
\]
for  $\alpha\geq2$, it  suffices to  show that  $z$ is  increasing with
$\alpha$, since  the right hand  side is decreasing with  $\alpha$.
Its derivative is
\[
z'(\alpha)=-\frac{(2-\sqrt2)^{1-\alpha} ( 1+\alpha \ln(2-\sqrt2))}{\alpha^2}.
\]
Observe that $\alpha\ln(2-\sqrt2)<-1$
for every $\alpha\geq 2$, and therefore $z'$ is positive and $z$ is
increasing as required.
This concludes the existence of $\tau_4$ with the required properties.

It remains to prove the remaining relations among $\tau$'s.
We begin with
\begin{align*}
\tau_1 & < \tau_2
          & \equiv \\
\left(\frac{\alpha-1}{2} \right)^{1/\alpha}
&<
\alpha(\alpha-1)^{1/\alpha-1}(2^{1-1/\alpha}-1)
          & \equiv \\
2^{-1/\alpha}
&<
\frac{\alpha}{\alpha-1}(2^{1-1/\alpha}-1)
          & \equiv \\
1
&<
\frac{\alpha}{\alpha-1}(2-2^{1/\alpha})
      & \equiv \\
\frac{\alpha-1}\alpha
&<
2-2^{1/\alpha}
      & \equiv \\
2^{1/\alpha} - 1/\alpha
&<
1.
\end{align*}
To prove this inequality, we note that it holds as an equality in the limit $\alpha \rightarrow \infty$, and that the left hand side is increasing in $\alpha$, since its derivative is
\[
\frac{1-2^{1/\alpha} \ln(2)}{\alpha^2},
\]
which is positive for $\alpha\geq 2$.

Now we show
\begin{align*}
\tau_2
&<
\tau_3
      & \equiv \\
\alpha(\alpha-1)^{1/\alpha-1}(2^{1-1/\alpha}-1) & < (\alpha-1)^{1/\alpha} & \equiv \\
\alpha(2^{1-1/\alpha}-1) & < \alpha - 1
        & \equiv \\
2^{1-1/\alpha}-1 & < 1-1/ \alpha
        & \equiv \\
2^{1-1/\alpha} + 1/\alpha & < 2.
\end{align*}
We observe that the left hand side has value 2 both at $\alpha=1$ and in the limit $\alpha \rightarrow \infty$. To conclude that this holds for all
$\alpha \in [1,\ \infty)$, we inspect the derivative of the left hand side with respect to $\alpha$, which is
\[
\frac{2^{1-1/\alpha} \ln(2)  - 1}{\alpha^2}.
\]
Since $2^{1/\alpha-1}$ is monotone, there is a unique value $\alpha_0 = \frac{\ln 2}{\ln2 + \ln\ln 2} \approx 2.1221 $ such that
\[
    2^{1/\alpha_0-1} = \ln(2).
\]
In conclusion, the derivative is negative for $1 \leq \alpha < \alpha_0$ and then positive for $\alpha > \alpha_0$, so inside the interval the function never exceeds $2$.

The inequality $\tau_3 < \tau_5$ follows from the equality $\tau_5 = 2^{1-1/\alpha} \tau_3$, while the inequality $\tau_5 < \tau_6$ follows from the equality $\tau_6 = 2^{1/\alpha} \tau_5$.
This concludes the proof of the lemma.
\end{proof}

With Lemma~\ref{lem:p1-resp-and-min} and Lemma~\ref{lem:d2*},
whose statements are summarized in Table~\ref{tbl:minima} and Figure~\ref{fig:bestResponse},
we can finally determine what is the best response of the first player
as a~function of $d_2$.  The following corollary follows from the definitions of $\tau_2,\tau_4$ and the inequalities on $\tau_1,\ldots,\tau_6$.

  \begin{corollary} \label{cor:bestResponse}
    The best response for player 1 as function of $d_2$ is
\begin{alignat*}{3}
d_1^1 &= d_2+(\alpha-1)^{1/\alpha} &&\text{\quad if \quad} & 0 &< d_2  \leq \tau_2,\\
d_1^2 &= 2\left(\frac{\alpha-1}{2} \right)^{1/\alpha}  &&\text{\quad if \quad} &  \tau_2 &< d_2 \leq {\tau_4},\\
d_1^3 &= \frac{d_2}{2} &&\text{\quad if \quad} & {\tau_4} &<d_2 \leq \tau_6,\\
d_1^4 &= (\alpha-1)^{1/\alpha} &&\text{\quad if \quad} &  \tau_6&<d_2.\\
\end{alignat*}
 \end{corollary}

\begin{figure}[t]
  \centering
  \input{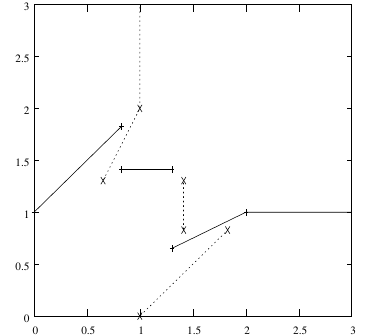_t}
  \caption{Best response  of player 1  as function of $d_2$ (dashed  lines), and best response of player 2 as function of $d_1$ (solid lines). Here for $\alpha=3$.}
  \label{fig:bestResponse12}
\end{figure}

By the symmetry  of the players, the second  player's best response is
in  fact  an  identical  function  of  $d_1$  as  the  one  stated  in
Corollary~\ref{cor:bestResponse}. By straightforward inspection it follows
that there is no fix point $(d_1,d_2)$ to this game, which concludes the proof of Theorem \ref{thm:prop}, see Figure~\ref{fig:bestResponse12} for illustration.


\section{Marginal cost sharing}
In  this section  we propose  a  different cost  sharing scheme,  that
improves on  the previous one  in the sense  that it admits  pure Nash
equilibria,  but does so at the  price  of  overcharging  by a
constant factor.

Before  we  give the  formal  definition  we  need to  introduce  some
notations. Let $\mathrm{OPT}(d)$ be the energy minimizing schedule for
the   given  instance,  and   $\mathrm{OPT}(d_{-i})$  be   the  energy
minimizing  schedule for  the instance  where job  $i$ is  removed. We
denote by $E(S)$ the energy cost of schedule $S$.

In the  marginal cost sharing  scheme, player $i$  pays the penalty function
\[p_id_i+E(\mathrm{OPT}(d))  -  E(\mathrm{OPT}(d_{-i})).\]  This  scheme
defines     an     exact     potential    game     by     construction
\cite{Monderer.Shapley:96:potential_games}.  Formally, let $n$  be the
number of players, $D=\{d | \forall j: d_j > r_j \}$ be the set of action profiles
(deadlines) over the action sets $D_{i}$ of each player.


Let us denote the effective social cost corresponding to a strategy profile $d$
by $\Phi(d)$.  Then
\[\Phi(d)= \sum_{i=1}^n p_id_i+E(\mathrm{OPT}(d)).\]

Clearly, if a player $i$ changes its strategy $d_i$ and his penalty decreases by some amount $\Delta$,
then the effective social cost decreases by the same amount $\Delta$, because $E(\mathrm{OPT}(d_{-i}))$ remains unchanged.
%
%

\subsection{Existence of Equilibria}
While the best response function is not continuous in the strategy profile,
precluding the use of Brouwer's fixed-point theorem, existence of pure Nash
equilibria can nevertheless be easily established.

To this end, note that the global minimum of the effective social cost, if it exists,
is a pure Nash equilibrium.  Its existence follows from (1) compactness
of a non-empty sub-space of strategies with bounded social cost and
(2) continuity of~$\Phi$.

For (2), note that $\sum_i p_i d_i$ is clearly continuous in $d$,
and hence $\Phi(d)$ is continuous if $E(\mathrm{OPT}(d))$ is.
The continuity of the latter follows from the fact that $E(\mathrm{OPT}(d))$ is the solution to a linear program with a convex objective function, with an optimum being continuous in $d$ \cite{bansal2007speed}.

For (1), let $d'$ be any (feasible) strategy profile such that $d'_i>r_i$
for each player $i$.  The subspace of strategy profiles $d$ such that
$\Phi(d) \leq \Phi(d')$ is clearly closed, and bounded due to the $p_i d_i$
terms.  Thus it is a compact subspace that contains the global minimum of $\Phi$.

\subsection{Convergence can take forever}

In this game the strategy set is infinite.  Moreover, the convergence time
can be infinite as we demonstrate below in
Theorem~\ref{thm:convergenceUnbounded}.
Note that this also proves that in general there are no dominant strategies in this game.

\begin{theorem} \label{thm:convergenceUnbounded}
  For  the  game  with   the  marginal  cost  sharing  mechanism,  the
  convergence time to reach a pure Nash equilibrium can be unbounded.
\end{theorem}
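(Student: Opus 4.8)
The plan is to exhibit a concrete instance and an explicit sequence of best-response moves that never reaches a fixed point in finitely many steps, yet whose strategy profiles converge to a Nash equilibrium only in the limit. The natural strategy is to reuse the two-symmetric-players setup that drove the analysis of the proportional scheme, but now under the marginal cost sharing mechanism. First I would write down, for two identical jobs ($w_1=w_2=1$, $r_1=r_2=0$, $p_1=p_2=1$), the marginal cost share of each player as a function of $(d_1,d_2)$. The key structural fact I would extract is the shape of the best-response map $d_1 \mapsto \mathrm{BR}(d_2)$: I expect it to be a \emph{contraction-like} map in some regime, so that iterating best responses produces a monotone sequence $d^{(0)}, d^{(1)}, d^{(2)}, \ldots$ that strictly improves at every step and converges to the symmetric equilibrium $d_1 = d_2$ without ever attaining it in finite time.

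The core of the argument is to show that no finite prefix of this best-response dynamics lands exactly on the equilibrium. Concretely, I would argue that whenever the current profile is off the diagonal $d_1 = d_2$, the responding player's unique best response is again strictly off the diagonal, so the symmetric fixed point is approached asymptotically but never reached. The cleanest way to formalize this is to identify a quantity (for instance the ratio or the gap $|d_1 - d_2|$) that shrinks by a multiplicative factor bounded away from both $0$ and $1$ at each round; such a factor simultaneously guarantees convergence and rules out termination after any finite number of steps. I would combine this with the existence result from the previous subsection, which guarantees that a pure Nash equilibrium \emph{does} exist, so the dynamics has a genuine limit to converge to — the point being precisely that the players can spend forever chasing it.

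The step I expect to be the main obstacle is establishing the clean contraction estimate on the best-response map, because the marginal cost share $E(\mathrm{OPT}(d)) - E(\mathrm{OPT}(d_{-i}))$ has a piecewise structure governed by whether the two execution intervals overlap (just as the penalty $f$ in Lemma~\ref{lem:p1-resp-and-min} split into four pieces). Pinning down which piece is active along the entire tail of the sequence, and then differentiating to verify that the contraction factor is strictly between $0$ and $1$ uniformly, will require careful case analysis analogous to — but distinct from — the proportional case. Once the active regime is isolated, the derivative computation should be routine, and the final statement follows: since every best-response step strictly decreases the effective social cost $\Phi$ yet never reaches the minimizer in finitely many moves, the convergence time is unbounded, which as remarked also precludes the existence of dominant strategies.
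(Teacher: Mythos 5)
You chose the right instance (two identical unit jobs under marginal cost sharing) and the right general strategy (exhibit a best-response sequence that never terminates), but your plan is built on a false picture of the equilibrium set, and this breaks the core of the argument. Under marginal cost sharing this game has \emph{no} equilibrium on the diagonal $d_1=d_2$: when both players pick the same deadline $d$, both jobs share the interval $[0,d]$ at speed $2/d$, each player pays $d+(2^\alpha-1)d^{1-\alpha}$, and (as the paper's case analysis of the piecewise penalty $h$ shows) a player strictly gains by jumping to $d_1+(\alpha-1)^{1/\alpha}$, i.e.\ by scheduling his job after the other one. The two pure Nash equilibria are the \emph{asymmetric} profiles
\[
   d_1=\Bigl(\tfrac{\alpha-1}{2}\Bigr)^{1/\alpha},\qquad d_2=d_1+(\alpha-1)^{1/\alpha},
\]
and its mirror image (the paper's phrase ``by symmetry, two equilibria'' refers to swapping the players' roles, not to a symmetric profile). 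Consequently your proposed invariant --- that the gap $|d_1-d_2|$ contracts by a factor bounded away from $0$ and $1$ toward the diagonal --- cannot be established: in the actual dynamics, whenever $d_1=\delta\bigl(\tfrac{\alpha-1}{2}\bigr)^{1/\alpha}$ with $1<\delta<2^{1/\alpha}$, player 2's unique best response is \emph{exactly} $d_1+(\alpha-1)^{1/\alpha}$, so the gap $d_2-d_1$ is the same constant $(\alpha-1)^{1/\alpha}$ after every move of player 2; nothing about it shrinks, and the profiles move \emph{away} from the diagonal, not toward it.

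The quantity that actually decreases is $\delta$, player 1's relative distance to the equilibrium value $\bigl(\tfrac{\alpha-1}{2}\bigr)^{1/\alpha}$: the paper shows that player 1's best response to $d_2=(\alpha-1)^{1/\alpha}(1+2^{-1/\alpha}\delta)$ is $\delta'\bigl(\tfrac{\alpha-1}{2}\bigr)^{1/\alpha}$ for some $\delta'\in(1,\delta)$, after which the situation repeats with $\delta'$ in place of $\delta$. Note also that the paper never needs (and never proves) a uniform contraction factor; for unbounded convergence time it suffices that $\delta'>1$ strictly, i.e.\ that no best response ever lands exactly on an equilibrium, while remaining in the regime $1<\delta'<2^{1/\alpha}$ where the analysis applies. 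So to repair your proof you would have to (1) replace the diagonal by the correct asymmetric target, (2) redo the case analysis of the piecewise marginal penalty (which differs from the proportional-scheme penalty $f$: e.g.\ the middle piece is $d_2+(2^\alpha-1)d_1^{1-\alpha}$, reflecting that the marginal charge accounts for the slowdown of the \emph{other} job), and (3) prove the two best-response lemmas above; at that point you would have reconstructed essentially the paper's proof.
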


\begin{proof}
The  proof is  by  exhibiting again  the  same small  example, with  2
players, release times $0$, unit weights, unit penalty factors, and $\alpha>2$.
From the previous section, we know that the game admits a pure Nash equilibrium, and by symmetry of the players, in fact two pure Nash equilibria. Following \cite{durr2014penaltyscheduling}, the  first one is
\[
   d_1 = \left( \frac{\alpha-1}{2} \right) ^{1/\alpha}, \: d_2 = d_1 +
(\alpha-1)^{1/\alpha},
\]
while the second one is symmetric for players $1$ and $2$.

In  the remainder  of the  proof,  we assume  that player  1 chooses  a
deadline  which is  close  to  the pure  Nash  equilibrium above.   By
analyzing the best responses of  the players, we conclude that after a
best response  of player 2, and then  of player 1 again,  he chooses a
deadline  which is  even closer  to the  pure Nash  equilibrium above
but still different from it,
leading  to an infinite  convergence sequence  of best  responses.

Now suppose that $d_1   =  \delta   \left(   \frac{\alpha-1}{2}  \right)
^{1/\alpha}$ for some $1<\delta< 2^{1/\alpha}$. What is the best response for player 2?

\begin{lemma} \label{lem:choice2}
Given the first player's choice $d_1$, the penalty of the second player
as a~function of his choice $d_2$ is given by
\begin{equation*}
  h(d_2,d_1) = \begin{cases}
  h_1(d_2,d_1) = d_2 +d_2^{1-\alpha}+(d_1-d_2)^{1-\alpha}-d_1^{1-\alpha} &\mbox{if } d_2 \leq \frac{d_1}{2} \\
  h_2(d_2,d_1) = d_2 +(2^\alpha-1) d_1^{1-\alpha}  &\mbox{if } \frac{d_1}{2} \leq d_2 \leq d_1 \\
  h_3(d_2,d_1) = d_2 +2^\alpha d_2^{1-\alpha}-d_1^{1-\alpha}&\mbox{if } d_1 \leq d_2 \leq 2d_1 \\
  h_4(d_2,d_1) = d_2 +(d_2-d_1)^{1-\alpha}		&\mbox{if } d_2 \geq 2d_1, \\
 \end{cases}
\end{equation*}
     and the best response for player 2 as function of $d_1$ is
     \begin{equation}
     \label{eq:brpla2}
     d_1+(\alpha-1)^{1/\alpha}=(\alpha-1)^{1/\alpha}(1+2^{-1/\alpha}\delta)
     \end{equation}
\end{lemma}

\begin{proof}
We first  analyze the behavior of  each of $h_i$,  finding their local
minima  in the respective  intervals, and  afterwards  we show that the equation \eqref{eq:brpla2} defines the best response for player 2 as function of $d_1$. For convenience we omit parameter $d_1$ in each function $h_i$.
Figure \ref{fig:convergence} illustrate the best response for player $2$ when player $1$ chooses $d_1=(\frac{\alpha-1}2)^{1/\alpha}\delta$ for some $\delta>1$.

\begin{description}
\item[Range of $h_4$:]
The derivative of $h_4$ in $d_2$ is
\[
 1-(\alpha-1)(d_2-d_1)^{-\alpha},
\]
which is zero exactly for the value
\[
d_1 + (\alpha-1)^{1/\alpha}.
\]
As the second derivative $h''_1(d_2) = \alpha(\alpha-1)(d_2-d_1)^{-\alpha-1}$ is
positive, the choice $d_2=d_1 + (\alpha-1)^{1/\alpha}$ minimizes the penalty among $d_2\geq
2d_1$. Therefore, $h_1$ has a~local minimum if
\[(\alpha-1)^{1/\alpha} \geq d_1=\delta \left( \frac{\alpha-1}{2} \right) ^{1/\alpha},\]
which holds by assumption $\delta<2^{1/\alpha}$.

\begin{figure}[ht]
\input{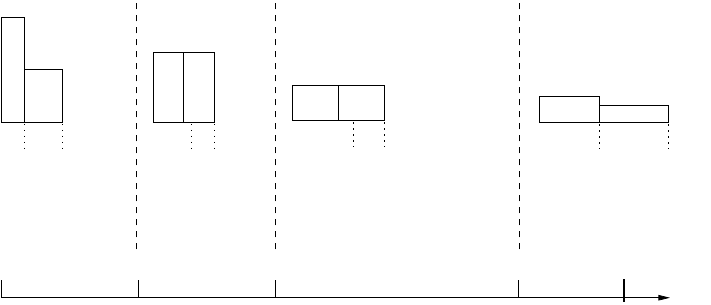_t}
  \caption{Best response for player $2$.}
  \label{fig:convergence}
\end{figure}

In that case the penalty would be
\begin{equation} \label{best2}
d_1 + (\alpha-1)^{\frac{1-\alpha}{\alpha}} + (\alpha-1)^{1/\alpha}
=
(\alpha-1)^{1/\alpha} (1+\delta/2^{1/\alpha} + 1/(\alpha-1)).
\end{equation}
By comparing \eqref{best2} with the remaining case, we show that it is
indeed the best choice for player 2.

\item[Range of $h_3$:] 
The derivative of $h_3$ in $d_2$ is
\[
  1 - 2^{\alpha}(\alpha-1)d_2^{-\alpha},
\]
which is zero for $d_2=2(\alpha - 1)^{1/\alpha}$ and negative for
$d_2 < 2(\alpha - 1)^{1/\alpha}$.  As
\[
  2(\alpha - 1)^{1/\alpha} > 2\delta\left(\frac{\alpha-1}{2}\right)^{1/\alpha} = 2d_1,
\]
the minimum penalty in this range is attained at the right boundary of the interval,
i.e., for $d_2=2d_1 = 2\delta\left(\frac{\alpha-1}{2}\right)^{1/\alpha}$.
But at this point the function $h$, which is continuous, coincides with $h_4$,
which is decreasing in $\left(2d_1,d_1+(\alpha-1)^{1/\alpha}\right)$. Hence
$2d_1$ is not a~local minimum of~$h$.

\item[Range of $h_2$:]
$h_2$ is an increasing function, and therefore it attains a~minimum value only at the lower end of its range, which is $d_1/2$ in this case. Clearly, $d_1/2$ is a~local minimum of~$h$
if and only if $h_1'(d_1/2) \leq 0$ and $h_2'(d_1/2) \geq 0$.
However, we have $h'_2(d_1/2)=h'_1(d_1/2)=1$, so $d_2=d_1/2$ is not a~local minimum of~$h$ either.


\item[Range of $h_1$:]
We will show that $h_1$ is strictly larger than \eqref{best2}.

Since $\delta  < 2^{1/\alpha}$, \eqref{best2} is at most
\[
   (\alpha-1)^{1/\alpha} ( 2+1/(\alpha-1)) = (\alpha-1)^{1/\alpha-1} (2\alpha-1).
\]
To lower bound $h_1$ we use the strict convexity of the function
$x\mapsto x^{1-\alpha}$, which implies
\[
2 \left(\frac{d_2^{1-\alpha}}{2}+\frac{(d_1-d_2)^{1-\alpha}}{2}\right)
>  2 \left( \frac{d_1}{2} + \frac{d_1-d_2}{2} \right)^{1-\alpha}
= 2^{\alpha} d_1^{1-\alpha}. 
\]
Note that $d_1 < (\alpha-1)^{1/\alpha}$ implies
$ d_1^{1-\alpha} > (\alpha-1)^{1/\alpha-1} $.
Combining these, we can finally strictly lower bound the
difference between $h_1$ and the value in \eqref{best2} by
\begin{eqnarray*}
&&d_2+(2^{\alpha}-1) d_1^{1-\alpha} -(\alpha-1)^{1/\alpha-1}(2\alpha-1)\\
  &               >&d_2+(2^{\alpha}-1)
                 (\alpha-1)^{1/\alpha-1}
                 -(\alpha-1)^{1/\alpha-1}(2\alpha-1)\\
&=& d_2 + (\alpha-1)^{1/\alpha-1}
(2^{\alpha}-2\alpha),
\end{eqnarray*}
which is non-negative since $2^{\alpha}\geq 2\alpha$ whenever $\alpha\geq 2$.
\end{description}
\end{proof}

From now on we assume that player 2 chooses $d_2 = d_1+(\alpha-1)^{1/\alpha}=(\alpha-1)^{1/\alpha}(1+2^{-1/\alpha}\delta)$.
What is the best response for player 1?

\begin{lemma} \label{lem:choice1}
Given the second player's choice $d_2$, the penalty of the first player
as a~function of his choice $d_1$ is given by $h(d_1,d_2)$
     and the best response for player 1 is
     \begin{equation*}
     d_1= \delta' \left(\frac{\alpha-1}{2}\right)^{1/\alpha},
     \end{equation*}
     for some $\delta' \in (1,\delta)$.
\end{lemma}

\begin{proof}
  Again player  1 best response  is analyzed through a  case analysis,
  similar   to  the   previous  one.    Figure  \ref{fig:convergence1}
  illustrates  the  best  response  for  player  $1$  when  player $2$  chose
  $d_2=(\alpha-1)^{1/\alpha}(1+2^{-1/\alpha}\delta)$      for     some
  $\delta>1$.   As in  the  previous proof,  for  convenience we  omit
  parameter $d_2$ in each function $h_i$.

\begin{figure}[ht]
\input{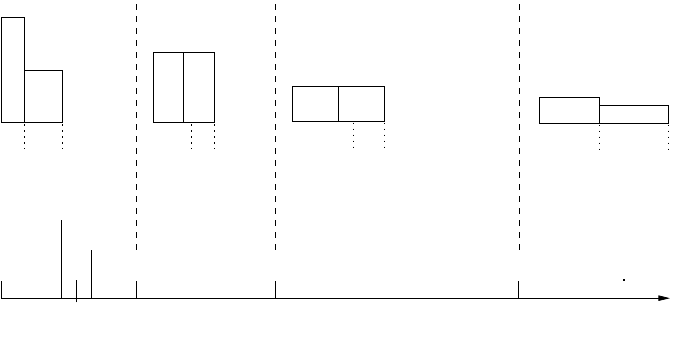_t}
  \caption{Best response for player $1$.}
  \label{fig:convergence1}
\end{figure}

\begin{description}
\item[Range of $h_1$:]
The first derivative of $h_1$ in $d_1$ is
\begin{equation*}
h'_1(d_1) = 1 + (\alpha-1)((d_2-d_1)^{-\alpha} - d_1^{-\alpha})
\end{equation*}
And the second derivative is
\[
h''_1(d_1) = \alpha (\alpha-1) ((d_2-d_1)^{-\alpha-1} + d_1^{-\alpha-1})\]
which is positive, implying that the penalty is convex in $d_1$.

Now, we show that we  have a local minimum for some $1<\delta'<\delta$
at some value
\[
\delta' \left (\frac{\alpha-1}2 \right)^{1/\alpha}.
\]

For this purpose we analyze the interval
\[
\left( \frac{\alpha-1}{2} \right)^{1/\alpha} \leq d_1 \leq
\delta \left( \frac{\alpha-1}{2} \right)^{1/\alpha}.
\]
First   we   evaluate   $h'_1$    at   the   lower   end   $d_1=\left(
  \frac{\alpha-1}{2} \right)^{1/\alpha} $.  In this case
\[
d_2-d_1=  \left(\frac{\alpha-1}{2}  \right)^{1/\alpha}
(2^{1/\alpha} + \delta -1).
\]
This means that
\begin{eqnarray*}
h'_1(d_1)& = &1 +  (\alpha-1)\frac{2}{\alpha-1} (
2^{1/\alpha} + \delta - 1)^{-\alpha}
- (\alpha-1)  \frac{2}{\alpha-1} \\
&=& 2 (2^{1/\alpha} + \delta - 1)^{-\alpha} -1 < 0 \enspace,
\end{eqnarray*}
as $\delta>1$ and $\alpha>1$.

Secondly, we evaluate the $h'_1$ at
the upper end
\begin{equation}
\label{upper1}
d_1= \delta \left( \frac{\alpha-1}{2} \right)^{1/\alpha},
\end{equation}
then by $d_2-d_1 = (\alpha-1)^{1/\alpha}$
we obtain
\[
1 + (\alpha-1) \frac{1}{\alpha-1} - (\alpha-1)\frac2{\alpha-1} \delta^{-\alpha}
= 2 - 2 /\delta^{\alpha} > 0 \enspace,
\]
as $\delta  <  2^{1/\alpha}$  and  $\alpha  >1$.
Together with the continuity of the penalty function, it implies
that there is a value $1<\delta'<\delta$ such that
\[d_1= \delta' \left(\frac{\alpha-1}{2}\right)^{1/\alpha} \]
is a local minimum.

To conclude we compare this local minimum with the remaining cases, showing the dominance of this value.

\item[Range of $h_2$:] $h_2$ is an increasing function and therefore is minimum  at $d_1=d_2/2$. Again, $d_1/2$ is a~local minimum of~$h$ if and only if $h_1'(d_2/2) \leq 0$ and $h_2'(d_2/2) \geq 0$. However, we have $h'_2(d_2/2)=h'_1(d_2/2)=1$, so $d_1=d_2/2$ is not a~local minimum of~$h$ either.




\item[Range of $h_3$:]
In this case, the derivative of the penalty function is
\[1-(\alpha-1)2^{\alpha}d_1^{-\alpha}\]
which is zero at
\begin{equation}
  \label{eq:zero}
d_1=2(\alpha-1)^{1 / \alpha}.
\end{equation}
Note that the second derivative
\[
 h''_3(d_1) = \alpha(\alpha-1)2^{\alpha}d_1^{-\alpha-1} > 0
\]
for $\alpha>1$, so the penalty is convex, and \eqref{eq:zero} is a local minimum.
It is greater than $d_2$ by $\delta<2^{1/\alpha}$ and smaller than
$2d_2$ by $\delta>0$. Therefore the local minimum belongs to
  the range considered in this case.

The penalty at $d_1=2(\alpha-1)^{1 / \alpha}$ is
\begin{eqnarray*}
& &        2(\alpha-1)^{1/\alpha}        +        2^\alpha        \cdot
2^{1-\alpha}(\alpha-1)^{1/\alpha-1}      -     (\alpha-1)^{1/\alpha-1}
(1+\delta/2^{1/\alpha})^{1-\alpha}
\\
&=&            (\alpha-1)^{1/\alpha-1}            \left(2\alpha           -
(1-\delta/2^{1/\alpha})^{1-\alpha} \right).
\end{eqnarray*}

We  claim that  this penalty  is  larger  than $h_1$ evaluated at \eqref{upper1}, eliminating $2(\alpha-1)^{1 / \alpha}$ for a best response.  For this purpose we need to
show
\begin{eqnarray*}
h_1\left(\delta \left( \frac{\alpha-1}{2} \right)^{1/\alpha}\right)&<&h_3(2(\alpha-1)^{1 / \alpha})\\
\Leftrightarrow \frac{\delta(\alpha-1)}{2^{1/\alpha}}                                 +
\frac{2^{1-1/\alpha}}{\delta^{\alpha-1}} + 1 &< & 2\alpha\\
\Leftrightarrow\delta^\alpha     (\alpha-1)     +    2     &<&(2\alpha-1)2^{1/\alpha}
\delta^{\alpha-1}.
\end{eqnarray*}
This holds since
\begin{align*}
 \delta^{\alpha-1}(2\alpha-1)2^{1/\alpha} - \delta^\alpha (\alpha-1)
   = \delta^{\alpha-1} \left( 2^{1/\alpha}(2\alpha-1) - \delta (\alpha-1) \right) \\
   > 2^{1/\alpha}(2\alpha-1) - \delta (\alpha-1) > 2^{1/\alpha}\alpha > 2,
\end{align*}
for any $\alpha>1$ because $2^{1/\alpha}\alpha$ increases with $\alpha>\ln 2$ and equals $2$ when $\alpha=1$.

\item[Range of $h_4$:]
The derivative of $h_4$ in $d_2$ is
\[
 1-(\alpha-1)(d_1-d_2)^{-\alpha},
\]
which is zero exactly for the value
\[
d_2 + (\alpha-1)^{1/\alpha}.
\]
As the second derivative $h''_1(d_1) = \alpha(\alpha-1)(d_1-d_2)^{-\alpha-1}$ is
positive, the choice $d_1=d_2 + (\alpha-1)^{1/\alpha}$ minimizes the penalty among $d_2\geq
2d_1$. However, $h_1$ has a~local minimum if
\[(\alpha-1)^{1/\alpha} \geq d_2=(\alpha-1)^{1/\alpha}(1+2^{-1/\alpha}\delta),\]
which is a contradiction by $\delta2^{-1/\alpha}>0$
\end{description}
\end{proof}

This concludes the proof of Theorem~\ref{thm:convergenceUnbounded}.
\end{proof}

\subsection{Bounding total charge}

In this section we bound the total cost share for the \textsc{Marginal
  Cost   Sharing   Scheme},  by   showing   that   it   is  at   least
$E(\mathrm{OPT}(d)) $ and at most  $\alpha$ times this value.  In fact
we show a stronger claim for individual cost shares.

\begin{theorem}  \label{thm:overCharge}  For  every  player  $i$,  its
  marginal  costshare is at  least its  proportional costshare  and at
  most $\alpha$ times the proportional costshare.
\end{theorem}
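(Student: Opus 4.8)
Let me denote by $\sigma_i$ the (constant) speed at which job $i$ is processed in $\mathrm{OPT}(d)$; recall that in the optimal speed-scaling schedule every job runs at a single speed, namely the density of its critical interval \cite{YaoDemmersShenker1995}. Writing $P_i$ for the proportional cost share and $M_i = E(\mathrm{OPT}(d)) - E(\mathrm{OPT}(d_{-i}))$ for the marginal cost share, the fact that job $i$ occupies total time $w_i/\sigma_i$ at speed $\sigma_i$ gives $P_i = (w_i/\sigma_i)\,\sigma_i^\alpha = w_i\,\sigma_i^{\alpha-1}$. The goal is to establish $P_i \le M_i \le \alpha P_i$, and I would treat the two inequalities by quite different arguments.

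For the lower bound I would argue directly from optimality. Take $\mathrm{OPT}(d)$ and set the speed to $0$ on the time set where job $i$ is processed, leaving every other job's slots untouched. Since in an EDF schedule the execution intervals of distinct jobs are disjoint, all remaining jobs still meet their deadlines, so this is a feasible schedule for the instance $d_{-i}$, and its energy is exactly $E(\mathrm{OPT}(d)) - P_i$. Optimality of $\mathrm{OPT}(d_{-i})$ then yields $E(\mathrm{OPT}(d_{-i})) \le E(\mathrm{OPT}(d)) - P_i$, i.e. $M_i \ge P_i$.

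For the upper bound the plan is a parametric (envelope) argument. Let $W(x)$ denote the minimum schedule energy when the workload of job $i$ is set to $x$ while everything else is fixed, so that $W(0) = E(\mathrm{OPT}(d_{-i}))$, $W(w_i) = E(\mathrm{OPT}(d))$, and $M_i = \int_0^{w_i} W'(x)\,dx$. The first key claim is that $W'(x) = \alpha\,\sigma_i(x)^{\alpha-1}$, where $\sigma_i(x)$ is the speed of job $i$ at workload $x$: increasing $x$ by $dx$ raises the density of job $i$'s critical interval $[a,b]$ while leaving its endpoints and the rest of the construction unchanged, and the block energy $V^\alpha/(b-a)^{\alpha-1}$ has derivative $\alpha\,(V/(b-a))^{\alpha-1} = \alpha\,\sigma_i(x)^{\alpha-1}$ in its total work $V$; equivalently, the marginal energy per unit work equals the marginal power $\tfrac{d}{ds}s^\alpha$ evaluated at the current speed. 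The second key claim is monotonicity: adding workload can only increase the optimal speed at every time point, so $\sigma_i(x)$ is non-decreasing and hence $\sigma_i(x) \le \sigma_i(w_i) = \sigma_i$ for all $x \le w_i$. Combining the two, $M_i = \int_0^{w_i}\alpha\,\sigma_i(x)^{\alpha-1}\,dx \le \alpha\,\sigma_i^{\alpha-1}\,w_i = \alpha P_i$.

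The two inequalities are then immediate; the main obstacle is making the envelope identity $W'(x) = \alpha\,\sigma_i(x)^{\alpha-1}$ fully rigorous across the finitely many values of $x$ at which the critical interval containing job $i$ changes, so that the block extent $[a,b]$ jumps. I would handle this by observing that $W$ is convex in $x$ — it is the value function of a convex energy-minimization program in which $x$ enters linearly — hence differentiable off a finite set and with monotone derivative, while the block-energy computation identifies that derivative with $\alpha\,\sigma_i(x)^{\alpha-1}$ wherever the structure is locally stable; the exceptional points form a null set and do not affect the integral. As a sanity check, in the single-player case $\sigma_i(x)=x/(d_i-r_i)$ grows linearly and the integral yields exactly $P_i$, matching $M_i=P_i$ there and illustrating why the ratio $M_i/P_i$ ranges over the whole interval $[1,\alpha]$.
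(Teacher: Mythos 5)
Your proposal is correct, and while your lower bound is exactly the paper's argument (replace job $i$'s execution in $\mathrm{OPT}(d)$ by idle time, note the result is feasible for $d_{-i}$ with energy $E(\mathrm{OPT}(d))-P_i$, and invoke optimality of $\mathrm{OPT}(d_{-i})$), your upper bound takes a genuinely different route. The paper compares the two schedules $S=\mathrm{OPT}(d)$ and $S'=\mathrm{OPT}(d_{-i})$ interval by interval: from the YDS structure it asserts that removing job $i$ never raises a speed and only changes speeds in intervals running at speed at most $s_0$ (job $i$'s speed), then bounds each per-interval energy difference $s_k^\alpha-s_k'^{\alpha}$ by $\alpha s_k^{\alpha-1}(s_k-s'_k)\le \alpha s_0^{\alpha-1}(s_k-s'_k)$ via the generalized Bernoulli inequality, and sums to get $\alpha s_0^{\alpha-1}w_i=\alpha P_i$. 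You instead treat job $i$'s workload as a parameter $x$, write the marginal share as $\int_0^{w_i}W'(x)\,dx$, and identify $W'(x)=\alpha\sigma_i(x)^{\alpha-1}$ by an envelope/KKT-multiplier argument, closing with monotonicity of this marginal price. Both proofs ultimately exploit convexity of $s\mapsto s^\alpha$ --- the paper pointwise in speed space (Bernoulli is just the gradient inequality for the power function), you globally in workload space; yours is more conceptual (it exhibits the marginal share as an integral of increasing marginal prices, explaining both why the ratio lies in $[1,\alpha]$ and the paper's tight example), while the paper's is more elementary and self-contained, though it too leans on structural YDS facts stated without proof. One refinement worth noting: your justification of monotonicity via ``adding workload can only increase the optimal speed at every time point'' is a nontrivial structural claim, but you do not actually need it --- convexity of $W$ (which you establish) together with the envelope identity already forces $W'$, and hence $\sigma_i(\cdot)$, to be non-decreasing; even more directly, the subgradient inequality $W(0)\ge W(w_i)-\alpha\sigma_i^{\alpha-1}w_i$ at $x=w_i$ yields $M_i\le\alpha P_i$ in one line, sidestepping both the monotonicity claim and the almost-everywhere differentiability bookkeeping.
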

\begin{proof}
  Fix a player $i$, and denote  by $S_{-i}$ the schedule obtained
  from  $\mathrm{OPT}(d)$ when all  executions of  $i$ are  replaced by  idle times.
  Clearly we have the following inequalities.
\[
   E(\mathrm{OPT}(d_{-i})) \leq E(S_{-i}) \leq E(\mathrm{OPT}(d))
\]
Then the marginal cost share of player $i$ can be lower bounded by
\[
   E(\mathrm{OPT}(d))-E(\mathrm{OPT}(d_{-i})) \geq E(\mathrm{OPT}(d))-E(S_{-i}).
\]

According to~\cite{YaoDemmersShenker1995}  the schedule $\mathrm{OPT}$
can be obtained by the  following iterative procedure.  Let $P$ be the
support of a  partial schedule. For every interval  $[t,t')$ we define
its domain $I_{t,t'} := [t,t') \backslash P$, the set of included jobs
$J_{t,t'} :=  \{j :  [r_j,d_j) \subseteq [t,t')  \}$, and  the density
$\sigma_{t,t'}  := \sum_{j \in  J_{t,t'}} w_j  / |  I_{t,t'} |  $.  The
procedure  starts  with $P=\emptyset$,  and  while  not  all jobs  are
scheduled,  selects an  interval  $[t,t')$ with  maximal density,  and
schedules  all jobs  from  $J_{t,t'}$ in  earliest  deadline order  in
$I_{t,t'}$ at speed $\sigma_{t,t'}$, then  adds $I_{t,t'}$ to $P$.

For the upper bound, let $t_1<t_2<\ldots<t_\ell$ be the sequence of
all release times and deadlines for some $\ell\leq 2n$. For convenience, we denote $S=\mathrm{OPT}(d)$ and $S'=\mathrm{OPT}(d_{-i})$.  Clearly both
schedules   run   at   uniform   speed    in   every   interval
$[t_{k-1},t_k)$. For every  $1\leq k\leq n$ let $s_k$  be the speed of
$S$ in  $[t_{k-1},t_{k})$, and  $s'_k$ the speed  of $S'$ in  the same
interval.

From the  iterative procedure described above  it follows that  every job is  scheduled at
constant speed. Let  $s_0$ be the speed at which  job $i$ is scheduled
in  $S$. It  also follows  that if
$s_k>s_0$, then $s'_k = s_k$,  and if $s_k\leq s_0$, then $s'_k\leq s_k$.

We establish the following upper bound.
  \begin{align*}
    E(\mathrm{OPT}(d))-E(\mathrm{OPT}(d_{-i}))  &  =   \sum_{k=1}^{\ell}  s_k^{\alpha}  (t_k-t_{k-1})  -
    s'^{\alpha}_k (t_k-t_{k-1})
\\
    & = \sum_{k=1}^{\ell}  (t_k-t_{k-1}) (s_k^\alpha - (s_k -(s_k-s'_k))^\alpha )
\\
    &  = \sum_{k=1}^{\ell}  (t_k-t_{k-1}) s_k^\alpha  \left(1 - \left  (1
        -\frac{s_k-s'_k}{s_k} \right)^\alpha \right)
\\
    & \leq \sum_{k=1}^{\ell} (t_k-t_{k-1}) s_k^\alpha \left (1 -  \left (1 -\alpha \frac{s_k-s'_k}{s_k}\right) \right)
\\
    & = \sum_{k=1}^{\ell} (t_k-t_{k-1}) \alpha s_k^{\alpha-1} (s_k-s'_k)
\\
    & \leq \alpha s_a^{\alpha-1}\sum_{k=1}^{\ell} (t_k-t_{k-1}) (s_k-s'_k)
\\
    & = \alpha s_a^{\alpha-1} w_i
\\
& = \alpha (E(\mathrm{OPT}(d))-E(S_{-i})).
  \end{align*}

The first  inequality uses  the generalized Bernoulli  inequality, and
the last  one the fact  that for all  $k$ with $s_k\neq s'_k$  we have
$s_k\leq s_a$.

The  theorem  follows  from   the  fact  that  $s_a^{\alpha-1}w_i$  is
precisely the proportional cost share of job $i$ in $\mathrm{OPT}(d)$.
\end{proof}


A  tight example  is  given by  $n$  jobs, each  with workload  $1/n$,
release time $0$ and
deadline $1$. Clearly  the optimal energy consumption is  $1$ for this
instance.   The    marginal   cost   share   for    each   player   is
$1-(1-1/n)^\alpha$. Finally we observe that the total marginal cost share
tends to $\alpha$, i.e.\
\[
  \lim_{n\rightarrow +\infty} n-n(1-1/n)^\alpha = \alpha.
\]


\section{Final remarks}

\subsection{Cross-monotonicity}
The \emph{cross-monotonicity} is a property of  cost sharing games, stating that whenever new players enter the game, the cost share of any
fixed player does not increase.  This property is useful for stability
in  the  game,  and  is  the  key  to  the  Moulin  carving  algorithm
\cite{MoulinShenker:01:Strategyproof-sharing}, which  selects a set of
players to be served for specific games.

In the game  that we consider, the minimum  energy of an optimal schedule
for a set $S$ of jobs contrasts  with many studied games,
where serving  more players becomes  more cost effective,  because the
considered equipment is better used.

Consider a  very simple example  of two identical  players, submitting
their respective jobs with the same deadline $1$. Suppose the workload
of each job is $w$, then  the minimum energy necessary to schedule one
job is $w^\alpha$, while the cost to serve both jobs is $(2w)^\alpha$,
meaning that the  cost share increase whenever a  second player enters
the game. Therefore the marginal cost sharing scheme is not cross-monotonic.


\subsection{Uniqueness of Nash equilibria}

In this paper we showed that the deadline game with the marginal cost sharing mechanism always admits a pure Nash equilibrium.  However the Nash equilibrium may not be unique. Here, a simple example is an instance with $n$ identical players where $n!$ Nash symmetric equilibria are admitted. For arbitrary instances, the uniqueness of Nash Equilibrium raises two questions.  The first question concerns the comparison of different Nash equilibria in terms of social cost.  If the divergence is significant, then it means that the outcome of the game can be arbitrary, and may indicate the need for another mechanism, which smooths the possible resulting Nash equilibria.  The second question concerns the characterization of job sets which lead to a unique Nash equilibrium.

We are interested in this last question, already in the 2 player setting.  Here we fixed normalized quantities $p_1=1,w_1=1$ for the first player, and leave the quantities of the second player variable.  Which points $(p_2,w_2)$ do admit a unique Nash equilibrium?

The 2 potential Nash equilibria are the following strategy profiles
\begin{description}
  \item[S21] $d_2=\ell_2^*, d_1=\ell_2^*+\ell_1^*$
  \item[S12] $d_1=\ell_1, d_2=\ell_1+\ell_2$,
\end{description}
where the lengths $\ell_1,\ell_2,\ell_1^*,\ell_2^*$ are defined as follows.
\begin{align}
\ell_2^* &= w_2 \frac{(\alpha - 1)^{1/\alpha}}{(p_1 + p_2)^{1/\alpha}}
&
\ell_1^* &= w_1 \frac{(\alpha - 1)^{1/\alpha}}{(p_1)^{1/\alpha}}
\label{eq:S21} \\
\ell_1 &= w_1 \frac{(\alpha - 1)^{1/\alpha}}{(p_1 + p_2)^{1/\alpha}}
&
\ell_2 &= w_2 \frac{(\alpha - 1)^{1/\alpha}}{(p_2)^{1/\alpha}}
\label{eq:S12}.
\end{align}

To break the symmetry we consider only points where the social cost of S21 is minimal among the two profiles. In \cite{durr2014penaltyscheduling} we showed that these are precisely the points that satisfy
\[
  w_2 \leq  w_1 \frac{ (p_1+p_2)^{(\alpha-1)/\alpha} - p_1^{(\alpha-1)/\alpha} }
                      { (p_1+p_2)^{(\alpha-1)/\alpha} - p_2^{(\alpha-1)/\alpha} }.
\]
For such points, we ask whether one of the players wants to deviate from S12, i.e., the other potential equilibrium, with larger social cost.
Our experiments indicate that, for each player $j\in\{1,2\}$,
there is a threshold $t_j$, which depends on $p_2$, such that player $j$
wants to deviate if and only if $w_2 \leq t_j$.
Figure~\ref{fig:2playerUnique} depicts the plots of these threshold functions,
determined numerically.  We were unable to rigorously prove their existence,
obtain their closed forms, or relate them to functions studied
in~\cite{durr2014order}.  However by choosing $\alpha=2$, $p_1=w_1=1$ and
$(p_2,w_2)$ inside the shaded region of
Figure~\ref{fig:2playerUnique}, one can easily verify that S21 is a unique pure Nash equilibrium.

\begin{figure}[t]
	\centerline{\includegraphics[height=8cm]{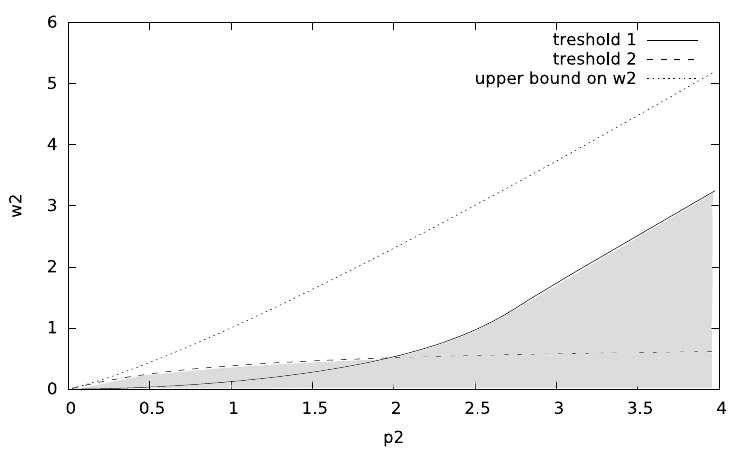}}
	\caption{Experimental results for $\alpha=2$. Horizontal axis is $p_2$, and vertical axis is $w_2$.  Depicted are the upper bound for $w_2$ and the thresholds $t_1,t_2$.}
	\label{fig:2playerUnique}
\end{figure}

\section*{Acknowledgements}
We would like to thank anonymous referees for remarks and suggestions on an earlier version of this paper.

\medskip\noindent
Christoph Dürr and Oscar C. Vásquez were partially supported by grant ANR-11-BS02-0015. Oscar C. Vásquez was partially supported by FONDECYT grant 11140566. {\L}ukasz Je{\.z} was partially supported ERC consolidator grant 617951, Israeli Centers of Research Excellence (I-CORE) program, Center No.4/11, NCN grant DEC-2013/09/B/ST6/01538, and FNP Start scholarship.

\bibliographystyle{plain}
\bibliography{MarginalTCS}
\end{document}